\newcommand{\problem}{FMD}
\newcommand{\problemb}{FmD}
\newcommand{\buono}{good}
\newcommand{\cattivo}{bad}
\newcommand{\concat}{\cdot}
\newtheorem{theorem}{Theorem}
\newtheorem{definition}{Definition}
\newtheorem{lemma}{Lemma}
\newtheorem{proposition}{Proposition}
\title{String factorisations with maximum or minimum dimension}
 \author[1]{A.~Monti\thanks{monti@di.uniroma1.it}}
 \affil[1]{\small Department of Computer Science, Sapienza University of Rome, Italy.}
 \author[2]{B.~Sinaimeri\thanks{sinaimeri@inria.fr}}
 \affil[2]{\small INRIA Grenoble, UMR CNRS 5558 - LBBE, Universit\'e Lyon 1, France.}
\begin{document}
\maketitle

\begin{abstract}
In this paper we consider two  problems concerning string factorisation. Specifically given a string $w$ and an integer $k$  find a factorisation of $w$ where each factor has length bounded by $k$ and has the minimum (the \problemb\ problem) or the maximum (the \problem\ problem)  number of different factors.  The \problemb\ has been  proved to be NP-hard even if $k=2$  in \cite{Schmid2016} and for this case we provide a $3/2$-approximation algorithm.  The \problem\ problem, up to our knowledge has not been considered in the literature. We show that this problem is NP-hard  for any $k\geq 3$. In view of this we propose  a $2$-approximation algorithm (for any $k$) an exact exponential algorithm.  We conclude with some open problems.
\end{abstract}

\textbf{Keywords:}\\
String factorisation, NP-hard problems, Approximation algorithms

\section{Introduction}

Combinatorial properties of strings of symbols from a finite alphabet have been studied in many different fields such as computer science, mathematics, physics, biology (see for example \cite{Bafna96,Karp93,Bulteau14,Fernau2015}).  In particular a problem that has gained much attention is the \emph{equality-free} string \emph{factorisation} with bounded width \cite{Condon2008,Condon2015} which arises in bioinformatics and is motivated by the problem of gene synthesis \cite{STEMMER1995,Jayaraj2005}. A \emph{factorisation} $F$ of  \emph{size} $l$ of a string $w$ is any tuple $(u_1, u_2 \ldots, u_l)$ of substrings of $w$ (called \emph{factors}),  that satisfies $w = u_1 \concat u_2 \concat \ldots u_l$. The \emph{width} of a factorisation is the maximum length of its factors.   An equality-free factorisation is a factorisation where \emph{all} the factors are different. Several variations of this problem have been considered in literature leading to interesting combinatorial problems with applications in various areas as bioinformatics, pattern matching or data compression (see for example \cite{Fernau2015,Schmid2016}). If we keep the equality-free property then two variants may be considered: either we require the factorisation to have large size or to have small width \cite{Schmid2016}. Both of these problems are NP-hard even if the alphabet is binary as shown in \cite{Condon2015,Schmid2016}. If the equality-free property is dropped, and hence repetition of factors is allowed, we can consider the converse problem and ask for a factorisation that is highly repetitive, meaning that has the smallest number of different factors \cite{Schmid2016}.  
It seems then natural  to consider the versions of this problem resulting by the interplay of two variables: the width $k$ and the number of different factors $d$ (that we call the \emph{dimension}).  Specifically we consider the following two problems.
 
\bigskip
\noindent
\textit{PROBLEM:  Factorisation of Minimum Dimension (\problemb )}\\
\noindent
\textit{INPUT:} A string $w$ in $\Sigma^n$ and an integer $k$.\\
\noindent
\textit{OUTPUT:} A factorisation of $w$ of width at most $k$ and of minimum dimension.

\bigskip
\noindent
\textit{PROBLEM:  Factorisation of Maximum Dimension (\problem )}\\
\noindent
\textit{INPUT:} A string $w$ in $\Sigma^n$ and an integer $k$.\\
\noindent
\textit{OUTPUT:}  A factorisation of $w$ of width at most $k$ and  of maximum dimension.

\bigskip

The \problemb\ is NP-hard. Indeed, its decisional version is studied in \cite{Schmid2016} as the \emph{Minimum Repetitive Factorisation Width} problem. In that paper it is proved to be  NP-complete even for factorizations of width $k=2$ (but the size of the alphabet must be unbounded) and  an FPT algorithm with respect to the size of the dimension is provided.  

The \problem\ problem is the dual version of the \problemb\ problem and  up to our knowledge,  has not been considered in the literature.

In Section~\ref{sec:preliminaries} we provide some definitions that we use in the rest of the paper. In  Section~\ref{sec:FmD} we consider the \problemb\ problem and for $k=2$ we provide a $3/2$-approximation algorithm.  Concerning the \problem\ problem  we show in Section~\ref{sec:Np-hardness} that if the size of the alphabet $\Sigma$ is not constant the problem is NP-hard even for $k=3$. Thus, we propose in Section~\ref{sec:exact_Algorithms},  an exact algorithm of time  $O(nk |\Sigma|^k 4^{|\Sigma|^k})$. Notice that if $k$ and the size of $\Sigma$ are both constant then the problem can be solved in linear time. Furthermore, in Section~\ref{sec:Approximation} we propose a $2$-approximation algorithm (for any $k$) for the \problem\ problem. Finally in Section~\ref{sec:Conclusions} we conclude with some open problems.

\section{Preliminaries}\label{sec:preliminaries}

In this paper we consider strings of length $n$ over some finite alphabet $\Sigma$. For $w \in \Sigma^n$, we denote by $|w|$ its  length and  by $w[i:j]$ the subtring of $w$ starting at position $i$ and ending at position $j$, if $i=j$ we indicate by $w[i]$ the symbol at position $i$.  The concatenation of two strings $u, v$ is denoted by $u \concat v$. When $u$ is a substring of $w$ we write $u \subseteq w$.  A \emph{factorisation} $F$ of a string $w$ is any tuple $(u_1, u_2 \ldots, u_l)$ of substrings  of $w$ with $l\geq 1$,  that satisfies $w = u_1 \concat u_2 \concat \ldots u_l$.  The substrings $u_i$ are called \emph{factors} of $w$.   A substring of length \emph{at most } $k$ is called a $k$-factor.   An $1$-factor is called \emph{singleton}.  The \emph{width} of a factorization is the maximum length of its factors.  A factorisation in $k$-factors is called \emph{ $k$-factorisation}.  Furthermore, $D(F)$ stands for the set of factors in $F$ and its cardinality is called the \emph{dimension} of $F$, denoted by $d(F)$. Notice that, due to multiplicity, the number of factors in $F$ can be larger than $d(F)$.


\section{An approximation algorithm for the \problemb\ problem}\label{sec:FmD}
The \problemb\ problem is proved to be NP-hard in \cite{Schmid2016} even for $k=2$ (but the size of the alfabet must be unbounded). In the following we present a $3/2$- approximation algorithm for the case $k=2$.  Notice that given a string $w$ on $m$ symbols and an integer $k$ a trivial algorithm that outputs a factorisation $F$ where each factor is a singleton, is a $k$-approximation for the \problemb\ problem. Indeed, $d(F)=m$ and any  other $k$-factorisation will have at least $m/k$ factors. 

\subsection{A $3/2$-approximation algorithm for the case $k=2$}\label{subsec:approx}

Here we show that by  modifying the trivial $2$-approximation algorithm we obtain  a $3/2$-approximation algorithm. 

\begin{definition}\label{def:good_factor}
Given a string $w$ a factor $f=ab$  is said \emph{\buono} if there exist a 2-factorisation $F$ of $w$ such that there is no factor in $F$ different from $ab$ that contains $a$ or $b$. A factor of size $2$ that is not \buono\  is said \emph{\cattivo}.
\end{definition}

For example in the string $w=acbacba$ the factor $cb$ is good but the factor $ac$ is not. We are now ready to define our algorithm.  The algorithm starts with a factorisation $F$ of $w$ containing only singletons. Let  $A(w)$ be the set of all \buono\ factors in $w$. While there is  a \buono\ factor  $ab$ in $A(w)$ it modifies $F$ by merging all the adjacent $1$-factors $a$, $b$ into the $2$-factor $ab$.  Then it updates  $A(w)$ by removing all the factors that contain either  $a$ or $b$.  When this operation is not possible anymore it ends outputting $F$. The pseudocode  is given in Algorithm~\ref{algo:3/2-approx}.

\begin{algorithm}
    \SetKwInOut{Input}{Input }
    \SetKwInOut{Output}{Output}
    
    \underline{function  GreedyMinFactorisation} $(w)$\;
    \Input{A string $w \in \Sigma^n$}
    \Output{A list $F$ of $2$-factors $[u_1,\ldots, u_t]$ with $u_1 \concat \ldots \concat u_t=w$ }
    $F \leftarrow [w[1],\ldots, w[n]]$ \;
    $A \leftarrow$ the good factors of $w$\;
    \While{$A\neq \emptyset$}{
    {Choose a factor $ab$ in $A$\;}
    {Update $F$ by substituting all the subsequent factors $w[i]=a, w[i+1]=b$ with the factor $w[i:i+1]=ab$\;
    $A\leftarrow A - \{xy \in A: \{x,y\}\cap \{a,b\}\neq \emptyset \}$\;}
     }
     return $F$\;
    \caption{A greedy algorithm for finding a  2-factorisation with minimum dimension}
    \label{algo:3/2-approx}
\end{algorithm}

Clearly the algorithm produces a valid $2$-factorisation of $w$. However, this may not be an optimal solution. Indeed, let $w$ be the string $a_1b_1a_1c_1 a_2b_2a_2c_2 \ldots a_{t} b_{t} a_{t}c_{t}$  with $|\Sigma|=3t$ and $n= 4t$. Notice that $A(w) =\emptyset$ and thus, the algorithm  produces a factorization consisting of all the $3t$ singletons. An optimal factorization that will take the $2$-factors $a_ib_i$ and $a_ic_i$ for all $1\leq i \leq t$ will have dimension $2t$. This shows that the approximation factor of Algorithm~\ref{algo:3/2-approx} is at least $3/2$. We prove in the next theorem that this is the real approximation ratio of our algorithm.

\begin{theorem}
The algorithm \emph{GreedyMinFactorisation} is a $3/2$-approximation algorithm for the \problem\ problem with $k=2$.
\end{theorem}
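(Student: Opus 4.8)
The approach I would take is to first pin down the exact shape of the factorisation $F$ returned by \emph{GreedyMinFactorisation}, and then to bound the optimum from below by looking at the multigraph whose edges are the $2$-factors used by an optimal $2$-factorisation $\mathrm{OPT}$. Throughout, let $m$ be the number of distinct symbols occurring in $w$, and rely on the easily verified characterisation: $ab$ is \buono\ if and only if every occurrence of $a$ in $w$ is immediately followed by $b$ and every occurrence of $b$ is immediately preceded by $a$ (in particular $aa$ is never \buono, since ``$a$ always followed by $a$'' would force an $a$ at the last position once any $a$ occurs).

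First, the structure of $F$. Whenever the algorithm picks a \buono\ factor $ab$ it removes from $A$ every factor containing $a$ or $b$; hence the picked factors are pairwise symbol-disjoint, and since the loop runs until $A=\emptyset$ while $A$ only ever loses factors sharing a symbol with a picked one, the picked set is \emph{maximal}: every factor that is \buono\ in $w$ shares a symbol with some picked factor. By the remark above each picked factor has the form $ab$ with $a\ne b$; writing $t$ for the number of picked factors and $\Sigma_G$ for the set of symbols appearing in them, $|\Sigma_G|=2t$. Because $ab$ \buono\ means every $a$ is immediately followed by $b$ and every $b$ immediately preceded by $a$, the left-to-right merging performed by the algorithm consumes \emph{all} occurrences of $a$ and of $b$, so $D(F)$ consists of exactly the $t$ picked $2$-factors together with one singleton for each occurring symbol outside $\Sigma_G$; consequently
\[
 d(F)=m-t .
\]

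Next, a lower bound on $d(\mathrm{OPT})$. Fix an optimal $2$-factorisation $\mathrm{OPT}$, let $p$ be its number of distinct $2$-factors, $q$ its number of distinct singletons, and $\Sigma_S$ the set of those $q$ singleton symbols, so $d(\mathrm{OPT})=p+q$. Form the multigraph $H$ whose vertices are the symbols touched by some $2$-factor of $\mathrm{OPT}$, with one edge $\{x,y\}$ per distinct $2$-factor $xy$ of $\mathrm{OPT}$ (a loop if $x=y$); let $H_1,\dots,H_\ell$ be its connected components, $H_i$ having $v_i$ vertices and $e_i$ edges, so $\sum_i e_i=p$. Every occurring symbol not in $V(H)$ appears in $\mathrm{OPT}$ only inside singletons, hence lies in $\Sigma_S$; setting $s=|\Sigma_S\cap V(H)|$ and $r=|\Sigma_S\setminus V(H)|$ gives $q=s+r$ and $m=\bigl(\sum_i v_i\bigr)+r$. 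Substituting these and $d(F)=m-t$ into the target $d(F)\le\tfrac32 d(\mathrm{OPT})$, the claim reduces to the combinatorial inequality
\[
 \sum_{i=1}^{\ell}\bigl(2v_i-3e_i\bigr)\;\le\;2t+3s+r .
\]
For one component: if $H_i$ is a tree then $2v_i-3e_i=3-v_i\le 1$, with equality exactly when $H_i$ is a single non-loop edge; if $H_i$ contains a cycle then $e_i\ge v_i$, so $2v_i-3e_i\le -v_i<0$. Hence the left side is at most the number $\beta$ of components that are a single non-loop edge $\{x,y\}$. The crucial claim is that each such component meets $\Sigma_G\cup\Sigma_S$: if neither $x$ nor $y$ lay in $\Sigma_S$, then — both having degree one in $H$ — every occurrence of $x$ and of $y$ in $w$ would be covered by a copy of the unique $2$-factor carried by that edge, forcing $x$ to be always immediately followed by $y$ and $y$ always immediately preceded by $x$, i.e.\ that $2$-factor is \buono, whence by the maximality above $\{x,y\}$ meets $\Sigma_G$. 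Since distinct components are vertex-disjoint, at most $|\Sigma_G|=2t$ of these single-edge components meet $\Sigma_G$ and at most $|\Sigma_S\cap V(H)|=s$ of the remaining ones meet $\Sigma_S$, so $\beta\le 2t+s\le 2t+3s+r$, as required. On the extremal word of the text ($\ell$ three-vertex path components, $t=0$, $\Sigma_S=\emptyset$) every inequality in this chain is tight, which is why $3/2$ is exactly the ratio.

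The hard part is the accounting around $\Sigma_S$: singletons of $\mathrm{OPT}$ cut both ways — they enlarge $d(\mathrm{OPT})$ (good for us) but also let $\mathrm{OPT}$ exploit \cattivo\ $2$-factors that the greedy is forced to miss (bad for us) — and the whole role of the per-component potential $2v_i-3e_i$, together with the split of single-edge components into those meeting $\Sigma_G$ versus those meeting $\Sigma_S$, is to make these two effects cancel. Within this, the single genuinely new step is the claim that a single-edge component of $H$ with neither endpoint a singleton of $\mathrm{OPT}$ must arise from a \buono\ factor and hence, by maximality of the greedy selection, touch $\Sigma_G$; the rest is the bookkeeping indicated above.
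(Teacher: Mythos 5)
Your proof is correct and takes a genuinely different route from the paper's. The paper establishes the $3/2$ ratio by sandwiching both $d(F^G)$ and $d(F^*)$ around the quantity $m-\tfrac{m_{good}}{4}$: it upper-bounds the greedy via a step-counting argument (each step kills at most $4$ ``good'' symbols, so at least $m_{good}/4$ steps occur) and lower-bounds the optimum via a canonical-form lemma plus a separate lemma bounding symbols inside bad factors. You instead compute the greedy value \emph{exactly} as $d(F)=m-t$, encode $\mathrm{OPT}$'s distinct $2$-factors as edges of a multigraph $H$, and run a per-component potential $2v_i-3e_i$; the only thing carried over from the greedy side is the maximality of the picked set and the characterisation of \buono\ factors, which together force any single-edge component of $H$ with no $\Sigma_S$-endpoint to touch $\Sigma_G$. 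This avoids the paper's canonical-form machinery entirely; in particular you do not need anything like Lemma~\ref{lem:bad_factors}, whose proof in the paper invokes connectivity of the conflict graph $G$ (``as the graph $G$ was connected'') even though the canonical form only guarantees absence of isolated vertices, whereas your component-by-component accounting is immune to that.

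One small caveat: your asserted characterisation ``$ab$ is \buono\ iff every $a$ is immediately followed by $b$ and every $b$ is immediately preceded by $a$'', together with the conclusion that $aa$ is never \buono, is only correct under the (paper's implicit) convention that a $2$-factor $ab$ has $a\neq b$. Under the literal Definition~\ref{def:good_factor}, $aa$ \emph{is} \buono\ in, say, $w=aa$ (factorise as $(aa)$), and then $d(F)=m-t$ fails (one gets $d(F)=1$ but $m-t=0$), because a picked $aa$ retires only one symbol, not two. The paper suffers from the same ambiguity (for instance item~(1) of Definition~\ref{def:canonical} is self-contradictory when $a=b$), so this is a shared convention rather than a gap in your argument, but your write-up should state it as an assumption rather than claim it is ``easily verified''.
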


\begin{proof}
Consider a string $w \in \Sigma^n$.  Let $S$ be the set of symbols that appear in $w$ and $m=|S|$. We denote by $S_{good}$ the set of symbols appearing in some maximal good factor in $A(w)$ and $S_{bad} = S- S_{good}$ the ones that do not appear. We denote by $m_{good}= |S_{good}|$ and $m_{bad}=|S_{bad}|$. Clearly, $m_{good}+ m_{bad}=m$.

Let $F^G$ be the factorisation produced from the greedy algorithm and let $F^*$ be an optimal factorisation.  We prove the following two results:

\begin{enumerate}
\item $d(F^G) \leq  m - \frac{m_{good}}{4} $ 
\item $d(F^*)\geq   \frac{2}{3} \bigl(m - \frac{m_{good}}{4}\bigr)$
\end{enumerate}

Notice that from $(1)$ and $(2)$ we have that $\frac{d(F^G)}{d(F^*)} \leq \frac{3}{2}$. 

\paragraph{Proof of $(1)$} The greedy takes two types of factors singletons and good factors. Let $\lambda^G_{good}$ be  factors of size two that are good and $\lambda^G_{singl}$ the number of factors of size 1. To prove (1) we show that:

\begin{enumerate}
\item[(i)] $\lambda^G_{singl}=m - 2 \lambda^G_{good}$
\item[(ii)] $\lambda^G_{good}\geq \frac{m_{good}}{4}$
\end{enumerate}

Notice that (1) follows immediately by (i)-(ii) by observing that $d(F^G)=\lambda^G_{singl}+\lambda^G_{good}$.  The equality in (i)  follows  by the definition of a good factor and line $6$ of Algorithm~\ref{algo:3/2-approx}. To prove (ii) we show that the algorithm makes at least $m_{good}/4$ steps. To this purpose, it is sufficient to  show that at each step of the algorithm the number of symbols that appear in $A=A(w)$ (\textit{i.e.} good symbols) decreases by at most $4$.  Consider a generic step of Algorithm~\ref{algo:3/2-approx} and let $f=ab$ the chosen good factor. By definition of good factor  there is at most one other factor in  $A$ containing $a$ ($b$) and this factor is of the form $xa$ ($by$).  Thus, at line 7 of the algorithm there are at most $3$ factors that are removed from $A$,  containing at most $4$ different symbols.

\paragraph{Proof of $(2)$} 

We first define a canonical form for a factorisation $F$. 

\begin{definition}\label{def:canonical}
We say that $F$ is in a \emph{canonical} form if the followings hold:
\begin{itemize}
\item[(1)]  for any factor $ab$ in $F$ none of the factors in $\{a,b,ba,bb,aa\}$ belongs to $F$.
\item[(2)] for any good factor $f=ab$ in $F$, there is no factor $f' \neq f$ in $F$ such that $f$ and $f'$ share a symbol.
\item[(3)] for any bad factor $f=ab$ in $F$, there is at least another bad factor $f' \neq f$ in $F$ such that $f$ and $f'$ share a symbol.
\end{itemize}
\end{definition}

\begin{lemma}\label{lem:optimal_canonical}
Any string $w$ has a canonical optimal factorisation $F$.
\end{lemma}
\begin{proof}
Let $F$ be an optimal factorization of $w$.  We consider three cases:

\paragraph{Case 1: } Suppose $F$ does not satisfy item (1) of Definition~\ref{def:canonical} then there exist in $F$ a factor $f=ab$ and a non empty set of factors $R \subseteq \{a,b,ba,bb,aa\}$. Then we  define a new factorisation $F'$ starting from $F$ and decomposing  all the factors $f'$ in $F$  with $f'\in R\cup\{ab\}$ into singletons. As $R$ is not empty  the number of different factors does not increase. Thus, $F'$ is still optimal.

\paragraph{Case 2:}  Suppose $F$ does not satisfy item (2) of Definition~\ref{def:canonical} then there exists a good factor $f=ab$ which shares symbols with other factors in $F$ and let $R$ be the set of these factors.  We define a new factorisation $F'$ starting from $F$, removing either $a$ or $b$ from each factor in $R$ different from $ab$. Finally we merge all the singletons $a$ and $b$ into $ab$. This operation is possible  as $ab$ was a good factor.We have that $F'$ is still a factorization and its dimension does not increase. Thus, $F'$ is still optimal.

\paragraph{Case 3:}  Suppose $F$ does not satisfy item (3) of  Definition~\ref{def:canonical} and let $f=ab$ be a bad factor such that there is no other bad factor in $F$ containing $a$ or $b$.  From the two previous items we can assume that the singletons $a$, $b$ are not in $F$ and no other good factor contains $a$ and $b$.   Thus, no other factor in $F$ contains $a$ or $b$.  By  Definition~\ref{def:good_factor} $f$ must be a good factor contradicting our initial hypothesis.

The proof concludes by observing that the three cases can be satisfied simultaneously. 
\end{proof}

From Lemma~\ref{lem:optimal_canonical} we can assume $F^*$ is canonical. The optimum will have three type of factors and let: $\lambda^*_{good}$ be the number factors of size two that are good, $\lambda^*_{bad}$ be the number of factors of size two that are bad and $\lambda^*_{singl}$ be the number  factors of size 1.  Observe that only good symbols may appear in good factors, whereas bad factor or singletons may contain both good and bad symbols. Let $m^{1}_{good}$ be the number of good symbols appearing in good factors in $F$ and $m^{2}_{good} +m^{2}_{bad}$ be the number of good and bad symbols appearing in bad factors. We show that:

\begin{enumerate}
\item[(i)] $\lambda^*_{singl}= m - m^{1}_{good} - m^{2}_{good} - m^{2}_{bad}$
\item[(ii)] $\lambda^*_{good} = \frac{m^{1}_{good}}{2}$
\item[(iii)] $\lambda^*_{bad} \geq \frac{2}{3} \bigl(m^{2}_{good} +m^{2}_{bad}\bigr)$
\end{enumerate}

The inequalities  (i) and (ii) are straightforward.  The inequality  (iii) is proved by the following lemma.

\begin{lemma}\label{lem:bad_factors}
Let $F$ be any canonical optimal factorisation of $w$, $f_1, \ldots, f_t$ all the bad factors in $F$ and  $S$ the set of symbols that appear in these factors.  It holds  that  $|S| \leq 3t/2$.
\end{lemma}

\begin{proof}
To prove the lemma we construct a graph $G=(V,E)$ where $V=\{f_1, \ldots, f_r\}$ and $\{f_i,f_j\} \in E$ if the factors $f_i$ and $f_j$ share a symbol. We partition the set of vertices according to the following procedure: We start with $G_1=G$. At step $i\geq 1$ we choose a vertex $v_i$ of maximum degree and let $R_i$  be the set of vertices containing $v_i$ together with its adjacents  in $G_i$.  Then  $G_{i+1}$ is obtained from $G_{i}$ by deleting the vertices in $R_i$ and the edges incident to them. The procedure ends when there are no more vertices left (\textit{i.e.} $G_i$ is empty). Let $R_1, \ldots R_k$ be the sequence of vertices produced by this procedure. Let  $|R_i|=r_i$ for any $1\leq i \leq k$. From item (iii) of Definition~\ref{def:canonical} the graph $G$ has no isolated vertex. Thus, let $j$ be the last index for which $r_j \geq 2$. Notice that $1\leq j \leq k$. For each $i \leq j$ the number of symbols that appear in the set $R_i$ is at most $r_i+1$. Indeed, $v_i$ contributes by at most $2$ symbols and each adjacent vertex shares a symbol with $v_i$ and hence the contribution of each adjacents to the set of symbols is at most $1$.  For any $i \geq j+1$ the vertices $v_i$ are isolated in $G_i$ and thus could theoretically contribute by $2$ symbols. However, as the graph $G$ was connected, $v_i$ must necessarily be adjacent to some vertex in $\cup_{l=1}^{i-1} R_i$ and thus each isolated vertex contributes by at most one new symbol. Summarizing we have that

$$
|S| \leq \sum_{i=1}^{j} (r_i+1) + \sum_{i=j+1}^{k} r_i=   \sum_{i=1}^{k} r_i + \sum_{i=1}^{j} 1 = t + j \leq \frac{3t}{2}
$$
\noindent
Where the last inequality follows by observing that for every $i \geq j$, $r_i \geq 2$  and thus  $j\leq t/2$  .
 
\end{proof}
\noindent
In conclusion using (i)-(iii) we have that:

\begin{eqnarray}\nonumber
d(F^*) &=& \lambda^*_{singl}+\lambda^*_{good}+ \lambda^*_{bad} \\ \nonumber
&\geq & m - \frac{1}{3} \biggl( m^{1}_{good} + m^{2}_{good} + m^{2}_{bad}\biggr) - \frac{1}{6}m^{1}_{good}  \\ 
&\geq & m  - \frac{1}{3} m - \frac{1}{6}m^{1}_{good} \\
&\geq & \frac{2}{3} \biggl(m - \frac{m_{good}}{4}\biggr).
\end{eqnarray}

The inequality (1) follows by the definition of good and bad symbols and point (2) of Definition~\ref{def:canonical} while inequality (2) follows by simply observing that $m^{1}_{good} \leq m_{good}$.

\end{proof}

We conclude this section by observing that the \emph{GreedyMinFactorisation} algorithm runs in $O(kn)$ computational time. Indeed, the set of good factors $A(w)$ can be computed in $O(n)$ by using for example a hash table and the second part of the algorithm requires $O(kn)$ time  as each symbol of $w$ is considered at most $k$ times.


\section{NP-hardness of the \problem\ problem}\label{sec:Np-hardness}

In this section we prove the  hardness of the \problem\ problem.  To this purpose we show the NP-completeness of its decision version.\\

\bigskip
\noindent
\textit{PROBLEM:  $k$-Factorisation of Maximum  Dimension ($k$-\problem\ )}\\
\noindent
\textit{INPUT:} A string $w$ in $\Sigma^n$ and an integer $d$.\\
\noindent
\textit{QUESTION:} Does there exists  a factorisation $F$ of $w$ of width at most $k$ such that $d(F)\geq d$ ? \\
 
\noindent 
The reduction is from the $3$-Dimensional -Matching problem that is known to be one of Karp's 21 NP-complete problems \cite{Karp1972}.  Let $X, Y$ and $Z$ be finite, disjoint sets, with $|X|=|Y|=|Z|$ and let $T$ be a subset of $X \times Y \times Z$. That is, $T$ consists of triples $(x, y, z)$ such that $x \in X, y \in Y$, and $z \in Z$.  $M \subseteq T$ is a $3$-dimensional matching if the following holds: for any two distinct triples $(x_1, y_1, z_1) \in M$ and $(x_2, y_2, z_2) \in M$, we have $x_1 \neq x_2, y_1 \neq  y_2$, and $z_1 \neq z_2$.  

\bigskip
\noindent
\textit{PROBLEM:  $3$-Dimensional-Matching problem  ($3$-DMP)}\\
\noindent
\textit{INPUT:} Sets   $X, Y$ and $Z$ each of size $l$ and a set  $T\subseteq X \times Y \times Z$.\\
\noindent
\textit{QUESTION:} Does there exists  a $3$-dimensional matching $M\subseteq T$ with $|M|=l$ ?
 
We are ready to prove the main result of this section.

\begin{theorem}
The $k$-\problem\ problem is NP-complete for any  $k\geq 3$. 
\end{theorem}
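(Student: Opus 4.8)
The plan has the usual two ingredients: showing that (the decision version of) the $k$-\problem\ problem is in NP, and exhibiting a polynomial reduction from $3$-DMP. Membership in NP is immediate --- a width-$k$ factorisation of $w$ is described by at most $n-1$ cut positions, hence is a certificate of size $O(n)$, from which one checks in polynomial time that every factor has length at most $k$ and that at least $d$ distinct strings appear among the factors. So the real work is in the hardness reduction.

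For the reduction, fix $k\ge 3$ and an instance $(X,Y,Z,T)$ of $3$-DMP with $|X|=|Y|=|Z|=l$ and $T=\{t_1,\dots,t_N\}$. After a trivial cleanup one may assume $N>l$ and that every element of $X\cup Y\cup Z$ occurs in at least two triples (an element in no triple makes the instance negative, and an element in exactly one triple forces that triple into every size-$l$ matching and can be contracted away). I would build a string $w$ that is, morally, a concatenation of one \emph{block} per triple, the blocks insulated from one another by fresh, private separator symbols (so that no ``useful'' factor can cross a block boundary), where the block of $t=(x,y,z)$ records $x,y,z$ --- for $k=3$ it is essentially the length-$3$ string $xyz$. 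The construction is augmented with auxiliary gadgets and padding whose sole purpose is to rigidify the structure of optimal factorisations, so that, at the optimum, each block is either split into singletons or keeps its triple-chunk $xyz$ as a single factor, and partial cuts producing short length-$2$ factors never gain anything. The target dimension $d$ is set to be exactly the dimension of the factorisation produced in the forward direction below; $|w|$, $|\Sigma|$ and $d$ are polynomial in the input.

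For ``matching $\Rightarrow$ dimension $\ge d$'', take a $3$-dimensional matching $M\subseteq T$ with $|M|=l$; since $|X|=|Y|=|Z|=l$ it is perfect and covers $X\cup Y\cup Z$. Factor $w$ by making every separator its own factor, splitting each block with $t\in M$ into singletons, and keeping the chunk $x_t y_t z_t$ of each block with $t\notin M$ as one factor. Since $M$ covers $X\cup Y\cup Z$ every element still occurs as a singleton, so this factorisation realises all $|\Sigma|$ symbols as singletons together with the $N-l$ pairwise distinct triple-chunks of the unmatched triples, giving exactly $d$. Conversely, from a width-$k$ factorisation $F$ with $d(F)\ge d$ I would first normalise $F$ (no factor crossing a separator, each separator alone, each block split or its chunk kept), using exchange arguments of the type used for Lemma~\ref{lem:optimal_canonical}, and then count: in normal form $d(F)$ equals $N$ plus the number of distinct element-singletons plus the number of blocks kept whole. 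Since an element appears as a singleton only in a block that is split, and a split block exposes only $3$ elements, reaching $d$ forces that exactly $l$ blocks are split and that these $l$ blocks together contain all $3l$ elements; $l$ triples containing $3l$ distinct elements are necessarily pairwise disjoint, i.e.\ they form a $3$-dimensional matching of size $l$.

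I expect the technical heart to be this normalisation step --- pinning down the gadgets/padding so that it is genuinely never profitable to cut a block non-canonically or to let a factor straddle a separator or several blocks --- together with fixing the additive constant in $d$ so that the forward bound is tight and the converse count has no slack. This is the part that must be done very carefully, because the naive ``$w=\prod_t(\text{separator}\cdot xyz)$'' with $d=2N+2l$ can in fact be beaten without a matching by exploiting partial cuts, so the gadgets have to neutralise precisely that loophole. Once $k=3$ is settled, the extension to every $k\ge 3$ is routine: the length-$3$ blocks are still legal $k$-factors, and using the extra width only merges factors, which can never increase the dimension.
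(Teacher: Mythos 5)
Your high-level plan is the same as the paper's: reduce from $3$-DMP by encoding each triple as a short block, insulated by fresh separator symbols, and arranging the target dimension $d$ so that matched triples must be shattered into singletons while unmatched triples keep their chunk. The forward direction and the ``normalise then count'' scheme are also in the spirit of the paper's Lemma~\ref{lem:optimal_canonical1}. You also correctly diagnose the obstruction: the naive $w=\prod_t(\text{sep}\cdot x_tyz_t)$ does \emph{not} work, because a non-canonical cut inside a block can expose a previously unseen length-$2$ (or, for $k\ge 4$, longer) factor, and this can beat the target $d$ without any matching.

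The gap is that you stop exactly at the point where that obstruction has to be removed. ``Auxiliary gadgets and padding whose sole purpose is to rigidify the structure'' is a placeholder, not a construction, and it is precisely the technical heart of the reduction --- without it, neither the normalisation lemma nor the counting $d(F)=N+(\text{element-singletons})+(\text{whole blocks})$ can be carried out, because those statements are only true once you have said what $w$ is. In the paper the block for triple $i$ is $u_i=t^1_i p_i q_i r_i t^2_i f_i$, and the rigidifying gadgets are: per-triple blocks $x_i=a^1_i t^1_i p_i a^2_i$ and $y_i=b^1_i r_i t^2_i b^2_i$, and per-pair blocks $z_{pq}=c^1_{pq}\,p\,q\,c^2_{pq}$ and $z_{qr}=c^1_{qr}\,q\,r\,c^2_{qr}$ for every pair occurring in some triple. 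After a canonicalisation lemma forces the $a$'s, $b$'s, $c$'s and $f_i$ to be singletons and forces $x_i,y_i,z_{pq},z_{qr}$ to be split so that $t^1_i p_i$, $r_i t^2_i$, $p q$ and $q r$ are \emph{always} present as factors, every $2$-factor that a partial cut of $u_i$ could possibly create is already in $D(F)$, so partial cuts contribute nothing new --- this is exactly the loophole you flag, and plugging it is not routine. Also note two smaller divergences: the paper does not need your preprocessing assumption that each element lies in at least two triples (the gadgets do that work), and the per-pair gadgets mean $d$ and $|\Sigma|$ carry extra $t_1,t_2$ terms, so your claimed $d=2N+2l$ accounting would have to be redone once the actual $w$ is fixed.
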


\begin{proof}
Let $T\subseteq X \times Y \times Z$ and  $l=|X|=|Y|=|Z|$ be  an instance of  $3$D-Matching. We create an instance of $k$-\problem\ as follows. We denote $|T|=t$.

\begin{itemize}
\item For each triple $(p_i,q_i,r_i) \in T$, with $1\leq i \leq t$, we create 
\begin{itemize}
\item the substring of length 6:  $u_i=t^1_i p_i q_i r_i t^2_i f_i$. 
\item the substring of length 4: $x_i=a^1_i t^1_i p_i a^2_i$.
\item the substring of length 4: $y_i=b^1_i r_i t^2_i  b^2_i$.

\end{itemize}

\item For each pair $(p,q)$ such that there exists $(p_i,q_i,r_i) \in T$, with $p_i=p, q_i=q$ we create  the substring of length 4: $z_{pq}=c^1_{pq} p q c^2_{pq}$.
\item For each pair $(q,r)$ such that there exists $(p_i,q_i,r_i) \in T$, with $q_i=q, r_i=r$ we create  the substring of length 4: $z_{qr}=c^1_{qr} q r c^2_{qr}$.

\end{itemize}

We consider the string $w= u_1\concat u_2 \concat \ldots \concat u_t \concat x_1\concat x_2 \ldots \concat x_t \concat y_1 \concat y_2 \concat \ldots \cdot y_t \concat z_{pq}  \concat \ldots \concat z_{qr}$. Notice that the length of the string $w$ is $6t + 4t +4t + 4t_1 + 4t_2$ where $t_1 \leq t, t_2 \leq t$ are the number of the pairs $(p,q)$ and $(q,r)$, respectively. 
The alphabet $\Sigma$ is given by the union of the sets $X, Y, Z$ and the set containing the symbols $t^1_i, t^2_i, f_i, a^1_i, a^2_i,b^1_i, b^2_i, c^1_{pq}, c^2_{pq}, c^1_{qr}, c^2_{qr}$. Hence the size of $\Sigma$ is given by $3l+7t+2t_1+2t_2 \leq 3l+11t$. Finally to define the  instance of $k$-\problem\ we set $d= 2l+10t+ 3t_1 +3t_2$.

We show now that $T$ has a $3$D-Matching of size $l$ if and only if $w$ has a $k$-factorization $F$, with $k \geq 3$ such that $d(F)\geq d$. \\

 \paragraph{$(\leftarrow)$} Let $M\subseteq T$ be a $3$D-matching with $|M|=l$.  We construct the following  factorization $F$ of $w$:

\begin{enumerate}
\item[(i)] For all $i$ for which $(p_i,q_i,r_i) \in M$ we split $u_i=t^1_i p_i q_i r_i t^2_i f_i$ in 6 factors: $t^1_i$,$p_i$, $q_i$, $r_i$, $ t^2_i$,$ f_i$. Notice that these factors appear exactly once as $M$ is a $3D$-matching. This contributes to the dimension of $F$ with $6l$ different factors. 

\item[(ii)] For all $i$ for which $(p_i,q_i,r_i) \not\in M$ we split $u_i=t^1_i p_i q_i r_i t^2_i f_i$ in  4 factors: $t^1_i$,$p_i q_i r_i$, $ t^2_i$,$ f_i$. Notice that these factors appear exactly once as all the triples in $T$ are different and $t^1_i,   t^2_i$ and $ f_i$ are different from the ones in item (i) as the set of triples considered in (i) and (ii) are disjoint.  This contributes  with $4 (t-l)$ different factors.

\item[(iii)] For all $1\leq i\leq n$ we split  $x_i=a^1_i t^1_i p_i a^2_i$  in 3 factors: $a^1_i$, $t^1_i p_i$, $a^2_i$. Notice that these factors are all different among them and also different from the ones in the previous items. This contributes  with $3t$ different factors.

\item[(iv)]  For all $1\leq i\leq n$ we split   $y_i=b^1_i r_i t^2_i  b^2_i$ in 3 factors: $b^1_i$, $r_i t^2_i$, $b^2_i$. Similarly to the previous item this contributes  with $3t$ different factors.

\item[(v)] For all $z_{pq}$ we split into 3 factors: $c^1_{pq}$, $p q$, $c^2_{pq}$. This contributes  with $3t_1$ different factors.

\item[(vi)] For all $z_{qr}$ we split  into 3 factors : $c^1_{qr}$, $q r$, $c^2_{qr}$. This contributes  with $3t_2$ different factors.
\end{enumerate}

Thus we have a factorisation $F$ of width $k=3$ and  $d(F)= 6l + 4 (t-l) + 3t+3t+ 3t_1 +3t_2 =2l+10t+ 3t_1 +3t_2=d$.

 \paragraph{$(\rightarrow)$} 
 
 We first define a canonical form for the  factorization $F$ of our string $w$.
 
 \begin{definition}\label{def:canonical1}
We say that $F$ for $w$ is in  \emph{canonical} form if the followings hold:
\begin{itemize}
\item[(1)]  the symbols $f_i, a^1_i, a^2_i, b^1_i, b^2_i, c^1_{pq}, c^2_{pq}, c^1_{qr}, c^2_{qr}$ appear as  $1$-factors in $F$.
\item[(2)] for all $i\in [t]$  the substrings $x_i$ and $y_i$ are factorized with the factors  $a^1_i$, $t^1_i p_i$, $a^2_i $ and $b^1_i$, $r_i t^2_i$,  $b^2_i$, respectively.
\item[(3)] for all pairs $(p,q), (q,r)$ that appear in the triples of $T$, the substrings $z_{pq}$ and $z_{qr}$ are factorized with the factors  $c^1_{pq}$, $p q$, $c^2_{pq}$  and  $c^1_{qr}$, $q r$, $c^2_{qr}$, respectively. 
\end{itemize}
\end{definition}

\begin{lemma}\label{lem:optimal_canonical1}
The  string $w$ has a canonical  factorisation $F$ of maximum dimension.
\end{lemma}
\begin{proof}
Let $F$ be a factorization of $w$ of maximum dimension. If $F$ is canonical we are done otherwise we consider three cases:

\paragraph{Case 1} Suppose $F$ does not satisfy item (1) of Definition~\ref{def:canonical1}. Then there is in $F$, a factor  $w_j$ of length at least 2 which contains at least one symbol $s$ in $\{f_i, a^1_i, a^2_i, b^1_i, b^2_i, c^1_{pq}, c^2_{pq}, c^1_{qr}, c^2_{qr}\}$.  We can define a new factorisation $F'$ from $F$ by decomposing $w_j=w^1_j \concat s\concat w^2_j$. Notice that $w_j$ contributes to the dimension of the factorisation by at most one. Moreover, the factorization $F'=w_1 \concat \ldots \concat w^1_j \concat s\concat w^2_j \concat \ldots \concat w_m$ contains the  factor $s$ which never appeared in $F$ (as $s$  appears only once in $w$). Hence, the dimension of the new factorization does not decrease.

\paragraph{Case 2} Suppose $F$ satisfies item (1) but does not satisfy item  (2) of Definition~\ref{def:canonical1}. Suppose first there exists an $i$ such that the string $x_i=a^1_i t^1_i p_i a^2_i$ is not split in $a^1_i$, $t^1_i p_i$, $a^2_i$.  However, $F$ satisfies item (1), $a^1_i$ and $a^2_i$ must necessarily be singletons and thus the only possibility left is that $x_i$ is split in $a^1_i$, $t^1_i$,$p_i$, $a^2_i$. We consider the new factorisation $F'$ where  we substitute  the two factors $t^1_i$,$p_i$, with the single factor $t^1_i p_i$.  Notice that $w$ contains exactly two occurrences of $t^1_i$: one in $x_i$ and one in $u_i=t^1_i p_i q_i r_i t^2_i f_i$. As all $u_j$ are separated by $f_j$, using item (1) of Definition~\ref{def:canonical1}, we have that in $F'$ (and also in $F$) $u_i$ must be split in such a way that the first factor is of the form $t^1_i\concat v$ (with $v \subseteq p_i q_i r_i t^2_i$). We consider the factorisation $F''$ for which we substitute the factor   $t^1_i\concat v$ with two factors $t^1_i$ and $v$. Notice that if $v=\emptyset$ then $F''=F'$. It remains to show that $d(F'') \geq d(F)$.  Notice that the only difference between $F$ and $F''$ is that $F$ contains $t^1_i\concat v$ and $p_i$, while $F''$ contains $t^1_i p_i$ and $v$.
We need to consider the following cases:

\begin{itemize}

\item  If $v=\emptyset$ or $v=p_i$ then $d(F'')= d(F)$.

\item Otherwise $p_i q_i\subseteq v\subseteq p_i q_i r_i t^2_i$ then we can split even more $v$ to ensure  that the dimension of the factorization does not decrease. Thus we can define $F'''$ which is obtained from $F''$ by substituting the factor $v$ with $p_i$ and $v'$ (where $v=p_i \concat v'$). Clearly $d(F''')= d(F)$.
\end{itemize}

It can be shown similarly that there exists a  factorisation of maximum dimension in which $y_i$ is decomposed as $b^1_i$, $r_i t^2_i$,  $b^2_i$ for all $i$. And this concludes the proof.

\paragraph{Case 3} Suppose $F$  satisfies items (1) and (2) of Definition~\ref{def:canonical1} but does not satisfy (3).  We follow the same argument as in the previous case.  There exists   a pair $(p,q)$ such that $z_{pq}=c^1_{pq} p q c^2_{pq}$ is not split in $c^1_{pq}$, $pq$ and $c^2_{pq}$. 
However, by item $(1)$  the $c^1_{pq}$ and $c^2_{pq}$ must necessarily be singletons and thus the only possibility left is that $z_{pq}$ is split in $c^1_{pq}$, $p$, $q$ and $c^2_{pq}$.  We define a new factorisation $F'$ starting from $F$ and substituting the two singletons $p$,$q$, with the single factor $pq$. We need to consider the following cases:
 \begin{itemize}

 \item $pq$ is not a factor of $F$ and at least one between $p$ and $q$ appear more than once in $F$ (and thus they still appear in $F'$). Then clearly $d(F')$ cannot be smaller than $d(F)$.
  
 \item $pq$ is a factor of $F$, then we define $F''$ starting from $F'$ and splitting the factor $pq$ of $F$ in $p$ and $q$. Clearly,   $d(F'')= d(F)$.

 \item Otherwise $pq$ is not a factor of $F$ and $p$ and $q$ appear both exactly once in $F$ (and thus they do not appear in $F'$).  Thus $d(F')=d(F)-1$. However, there exists at least one triple $(p_i,q_i,r_i) \in T$ for which $p_i=p, q_i=q$. Then there are two possibilities: 
\begin{itemize}
 \item $u_i$ is split such that $p, q$ appear together in a single factor $v_ipqv'_i$ with at least one among $v_i, v'_i$ non empty. Then we define  $F''$ from $F'$ by splitting the factor $v_ipqv'_i$ into $v_i$, $p$, $q$ and $v'_i$. Notice that even if $v_ipqv'_i$ appeared exactly once and $v_i$ and $v'_i$ are not new, we have that $d(F'')= d(F')-1+2=d(F)$;

 \item   $u_i$ is split such that $p, q$ appear in two distinct factors that can only be $t^1_i p$ and $qrv'_i$. Then we can assume from item  (2) of Definition~\ref{def:canonical1}  that in $F$ (and in $F'$ also) the factor $t^1_i$ never appears. Thus, we define $F''$ from $F'$ by splitting $t^1_i p$ into $t^1_i$ and  $p$. Notice that $d(F'')= d(F')+1= d(F)$. 
 \end{itemize} 
 
 \end{itemize}
  It can be shown similarly that there exists a  factorisation of $w$ of maximum dimension  in which $z_{qr}$ is decomposed as $c^1_{qr}$, $q r$, $c^2_{qr}$, for all pairs $(q,r)$.
\end{proof}

Let $F$ be a factorization of $w$ of maximum dimension with $d(F) \geq d$. From Lemma~\ref{lem:optimal_canonical1} we can assume $F$ is canonical. Thus, we have that the substrings  $x_i$ and $y_i$ of $w$ contribute to the dimension of $F$ by at most $3t+3t=6t$ factors (by item (2) of Definition~\ref{def:canonical1}), and the subtrings $z_{pq}$ and $z_{qr}$ contribute by $3t_1+3t_2$  (by item (3) of Definition~\ref{def:canonical1}).  It remains to quantify the contribution of the substrings $u_i$, for all $1\leq i\leq t$.  We partition the set of $u_i$s into two subsets $A$ and $B$ where $A$ is the set of  $u_i$s that are decomposed in $F$ in  exactly $6$ factors $t^1_i$,$p_i$, $q_i$, $r_i$, $ t^2_i$,$ f_i$. Let $t'=|A|$.  Notice that any  $u_i=t^1_i p_i q_i r_i t^2_i f_i$ in $B$ can be decomposed in $4$ or $5$ factors (from item (1) of Definition~\ref{def:canonical1}). In both cases the contribution, of each of these substrings to the dimension of $F$ is at most $4$ as for all $i$ $p_iq_i$ and $q_i r_i$ belong to $F$  (by item (2) of Definition~\ref{def:canonical1}). Hence in total the substrings in $B$ contribute by at most $4(t-t')$. The contribution of the substrings $u_i \in A$ is at most $3t'+3 \min \{t',l\}$ as the number of distinct factors $p_i$, $q_i$ and $r_i$ is bounded by both $|A|$ and $|X|+|Y|+|Z|=3t$ (the total number of symbols). Hence, $d(F) \leq 6t + 3t_1+3t_2 + 4(t-t')+ 3t'+3 \min \{t',l\} $. If $t' < l$ or $t' > l$ then $d(F) < 2l + 10t + 3t_1 + 3t_2$.   However, we know that $d(F) \geq 2l+10t+3t_1+3t_2$. Thus, the only possibility is that $t'=l$ and each $u_i$ must contribute by exactly $6$ to the dimension. Thus, the triples $(p_i,q_i,r_i)$ corresponding to $u_i \in A$ form a $3D$-matching. This concludes the proof.

\end{proof}


\section{A $2$-approximation algorithm for the \problem\ problem}\label{sec:Approximation}
Here we provide a greedy  $2$-approximation algorithm for the \problem\ problem for any width $k$.  Given $w \in \Sigma^n$ and an integer $k$ the main idea is to read  $w$ in a sequential way and   each time we look for a new $k$-factor (\textit{i.e.} a factor that has not appeared before) that ends first. The pseudocode is given in Algorithm~\ref{algo:2-approx}. The list $F$ contains a factorisation of a prefix of $w$ and the set $D=D(F)$ is the set of the factors in $F$. The variable $p$ contains the smallest index for which the suffix $w[p:n]$ is yet not factorized.  Initially $p=1$.  At each step we search for a new $k$-factor in the $w[p:n]$ (\textit{i.e.} a factor not in $D$). If there is none then we partition the string $w[p:n]$ arbitrary into $k$-factors. 
Otherwise, we consider among the new $k$-factors the one that finishes in the smallest index and in case of ties we choose the  one that starts first (\textit{i.e.} the longest one). Let $w[t:j]$ be the factor chosen, then if $p<t$ then we partition the string $w[p:t]$ arbitrary in $k$-factors.  

Clearly the Algorithm~\ref{algo:2-approx} produces a $k$-factorisation of $w$ which may not be optimum. Indeed, let  $w=aababa$ and let $k=2$.  The algorithm will produce the factorisation $F= a\concat ab \concat a \concat b \concat a  $ that has dimension 3 as the set of different factors is $\{a, b,  ab\}$. However, an optimal factorisatation is of dimension $4$ as $F^*=  aa\concat b \concat ab \concat a $.  The following theorem shows that this algorithm is a a $2$-approximation for the \problem\ problem and  runs in  $O(kn)$ computational time.

\begin{algorithm}
    \SetKwInOut{Input}{Input }
    \SetKwInOut{Output}{Output}

    \underline{function  GreedyMaxFactorisation} $(w,k)$\;
    \Input{A string $w \in \Sigma^n$ and an integer $k$}
    \Output{A list $F$ of $k$-factors $[u_1,\ldots, u_m]$ with $u_1 \concat \ldots \concat u_m=w$}
    $p \leftarrow 1$\;
    $F \leftarrow [\,]$ \;
    $D \leftarrow \emptyset$\; 
    \While{true}{
    $I \leftarrow$ all $k$-factors in $w[p:n]$ that do not belong in $D$\;
    \If{$I=\emptyset$}
      {
       exit\;
      }
     Let $w[t:j] \in I$ be the longest $k$-factor with smallest $j$\; 
     $D\leftarrow D\cup \{w[t:j]\}$\;
     \If{$p<t$}
     {
     Let $w[p,t-1]=s_1 \concat \ldots \concat s_r$ such that $\forall 1\leq l < r $ $|s_l| =k$ and $|s_r| \leq k$\;
     Append $s_1, \ldots,  s_r$ to $F$\;
     }
     Append $w[t:j]$ to $F$\;
     $p\leftarrow j+1$\;
     }
     \If{$p\leq n$}
     {
      Let $w[p,n]=s_1 \concat \ldots \concat s_r$ such that $\forall 1\leq l < r $ $|s_l| =k$ and $|s_r| \leq k$\;
      Append $s_1, \ldots,  s_r$ to $F$
     }
     return $F$\;
    \caption{A greedy algorithm for finding a  k-factorisation with maximum dimension}
    \label{algo:2-approx}
\end{algorithm}

\begin{theorem}\label{theo:2approx}
The algorithm \emph{GreedyMaxFactorisation} is a $2$-approximation for the \problem\  problem and runs in time $O(kn)$.
\end{theorem}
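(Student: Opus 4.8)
The plan is to split the statement into two independent parts: the running time bound and the approximation guarantee. The running time is the easier half. Each iteration of the \texttt{while} loop advances the pointer $p$ by at least one position (since $j \geq p$ and $p$ becomes $j+1$), so there are at most $n$ iterations. Inside an iteration, computing the set $I$ of new $k$-factors starting in $w[p:n]$ and picking the one ending earliest can be done in $O(k)$ time if we maintain $D$ as a hash table (or a trie of depth $k$): we only need to look at the at most $k$ candidate factors $w[p:p], w[p:p{+}1], \dots, w[p:p{+}k{-}1]$ and find the shortest index $j$ among those not in $D$. The arbitrary partitioning of the skipped blocks $w[p:t{-}1]$ is charged against the positions it consumes, and across the whole run these positions are disjoint, so this contributes $O(n)$ total. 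Hence the algorithm runs in $O(kn)$ time.

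For the approximation ratio, let $F$ be the factorisation produced by \texttt{GreedyMaxFactorisation} and let $F^*$ be an optimal $k$-factorisation, so $d(F^*) = \mathrm{OPT}$. I would like to show $d(F) \geq \mathrm{OPT}/2$. The key structural observation is about the positions at which the greedy ``commits'' a new factor: let $w[t_1:j_1], w[t_2:j_2], \dots$ be the successive factors that the greedy adds to $D$ in line of the algorithm (the ones chosen as ``the longest new $k$-factor with smallest $j$''), and note these are all \emph{distinct} by construction, so their number is exactly $d(F)$ unless the final leftover block introduces a few more — which only helps. The plan is to argue that these commit positions $j_1 < j_2 < \cdots < j_{d(F)}$ form a ``hitting'' structure for the optimal factorisation: between two consecutive greedy commit endpoints $j_\ell$ and $j_{\ell+1}$, the optimal factorisation $F^*$ cannot contain more than two factors whose set of distinct values is disjoint from everything the greedy has seen so far. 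More precisely, I would set up a charging argument: each distinct factor of $F^*$ gets charged to a greedy commit step, and I claim each greedy step receives at most two charges.

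To make the charging precise: consider the factors of $F^*$ in left-to-right order, $v_1, v_2, \dots$, and walk through them tracking the greedy pointer. When the greedy is at position $p$ and $F^*$ has a factor $v_i$ starting at some position $q \geq p$: if $v_i \in D$ already (the greedy has committed this value earlier), charge $v_i$ to the greedy step that first committed it. Otherwise $v_i$ is a new factor available to the greedy at position $\le q$; by the greedy rule, the greedy will commit \emph{some} new factor ending no later than the end of $v_i$, so after at most ``skipping past'' this region the greedy makes a commit $w[t:j]$ with $j \le \mathrm{end}(v_i)$. The delicate point is that two consecutive $F^*$-factors $v_i, v_{i+1}$ that straddle a single greedy commit endpoint might both be new; both then get charged to that one greedy step, giving the factor $2$. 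I expect the main obstacle to be handling the positions that the greedy partitions \emph{arbitrarily} (the blocks $w[p:t{-}1]$ and the final $w[p:n]$): these introduce factors into $F$ that are not ``committed new factors,'' and I must make sure the argument is not accidentally relying on them while also checking they cannot hurt the bound — they can only add to $D$, never remove, so $d(F)$ only goes up. Formalising exactly why each greedy commit absorbs at most two optimal factors — i.e., pinning down that a third new optimal factor starting after $j_\ell$ would necessarily end after $j_{\ell+1}$ and hence be charged to a later step — is where the real care is needed, and the lower-bound instance $w = aababa$ shows the factor $2$ is tight, which is a useful sanity check that the charging cannot be improved.
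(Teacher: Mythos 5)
Your proposal correctly identifies the overall strategy — partition the optimal factor set $D^*$ into the part $S_1$ that coincides with the greedy's $D$ and the part $S_2$ disjoint from $D$, bound $|S_1|\le|D|$ trivially, and then bound $|S_2|\le|D|$ by a charging argument. This is essentially the paper's route. However, there are two concrete gaps.

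First, the charging step you flag as ``where the real care is needed'' is precisely the part you leave unproven, and the worry you articulate is slightly off target. You worry that ``two consecutive $F^*$-factors $v_i,v_{i+1}$ that straddle a single greedy commit endpoint might both be new'' — but the first occurrences of distinct factors of $F^*$ are disjoint intervals of $w$, so at most one of them can contain any fixed position. The paper makes this clean with a short proposition: for any $k$-factor $v=w[i:j]$ with $v\notin D$, there exists $u=w[i':j']\in D$ with $j'\in[i,j]$ (otherwise, take the $u\in D$ of maximum endpoint $j'<i$; right after $u$ is committed, $p=j'+1$, at which point $v$ is a new $k$-factor available to the greedy, contradicting that the next commit had endpoint $\le j'$). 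Mapping each $v\in S_2$ to such a $u$ is then an injection because the first-occurrence intervals of $F^*$ are pairwise disjoint, and $|S_2|\le|D|$ follows. Your write-up gestures at this but does not actually close it; as stated, the factor of $2$ is asserted rather than derived.

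Second, the running-time implementation you describe is incorrect. You restrict attention to the $k$ factors $w[p:p],\ldots,w[p:p+k-1]$ that \emph{start} at $p$, but the algorithm chooses among all new $k$-factors $w[t:j]$ with $t\ge p$, not just $t=p$. For example, if every factor starting at $p$ already lies in $D$ but $w[p+1:p+1]$ is new, the algorithm must commit it, while your scan would miss it (and, worse, might declare $I=\emptyset$ prematurely, which would break the approximation guarantee, since the analysis relies on the greedy never skipping a new factor). The correct $O(kn)$ implementation scans endpoints $j=p,p+1,\ldots$ and, for each $j$, checks the at most $k$ factors \emph{ending} at $j$; since $p$ (and hence $j$) only increases across iterations, each of the $n$ positions is scanned once, giving $O(kn)$ total.
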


\begin{proof}

Let $w$ and $k$ be an instance of \problem\ problem. Clearly, \emph{GreedyMaxFactorisation} produces a $k$-Factorisation for the string $w$. We prove that it is a $2$-approximation.  Let $F$ and $D$ be the $k$-factorization and the set of $k$-factors produced at the end of \emph{GreedyMaxFactorisation} and let $d=d(F) = |D|$.
Notice that each factor in $D$ is implicitly associated to its first occurrence in $F$.  Consider any $k$-factorisation $F^*$ and let $D^*=D(F^*)$ and $d^*=d(F^*) = |D^*|$. Notice that it is sufficient to show that $d^* \leq 2 d$.  To this purpose we partition the set $D^*$ into two sets: the set $S_1$ of $k$-factors that were chosen  by the greedy algorithm and $S_2$ of the $k$-factors that were not. More formally, let $D^*= S_1 \cup S_2$ where $S_1 \subseteq D$ and $S_2 \cap D =\emptyset$. Each element of the set $S_2$ can be associated to its first occurrence $w[i:j]$ in $F^*$. 
Clearly $|S_1| \leq |D| =d$. It remains to show that $|S_2| \leq d$.

Given two subtrings $s=w[i:j]$ and $s'=w[i':j']$ of $w$, we say that $s$ \emph{ends within} $s'$ if $j \in [ i',j']$. 

\begin{proposition}\label{prop:withinproperty}
For any $k$-factor $v$ of the string $w$ such that $v \not \in D$ there exists a $k$-factor $u \in D$ such that $u$ ends within $v$. 
\end{proposition}
\begin{proof}
Let  $v=w[i:j]$ be a $k$-factor not belonging to $D$.  We search for a $k$-factor $u=w[i':j']$ in $D$ of maximum $j'$ such that  $j' \leq j$. Notice that such a string must exists  as at least $w[1:1]$ belongs to $D$. We prove that $j'\in [i,j]$. Suppose on the contrary that $j' < i$. Then after the $k$-factor $u=w[i':j']$ is added to $D$ at line 11, the algorithm continues with $p=j'+1$ at line 17. Notice that now $v=w[i:j] \in I$ and as $v \not \in D$ the algorithm must have chosen another string with $w[i'':j'']$ with  $j'+1\leq j'' \leq j$ which contradicts the maximality of $j'$.
\end{proof}

By Proposition~\ref{prop:withinproperty}  each $k$-factor $v=w[i:j]$ of $S_2$ can be associated to a $k$-factor  $u=w[i':j']$ in $D$. Moreover, the $k$-factor $u=w[i':j']$ of $D$ can be associated to at most one $k$-factor of $S_2$ since at most one $k$-factor of $S_2$ includes position $j'$. Thus, $|S_2| \leq |D| =d$. 

To show that \emph{GreedyMaxFactorisation} runs in $O(kn)$ time notice that inside the while loop we do not need to compute each time the set $I$. Indeed, we can implement the search for the new $k$-factor $w[t:j]$, that will be added in $D$, as follows:  at each position $j$ with  $p \leq j\leq n$ we search for the longest new $k$-factor that ends in $j$.  This means that starting from $j$ we will go back and consider at most $k$ positions. This gives a computational time of $O(kn)$.

\end{proof}

 We conclude this section by noticing that if $k$ is constant the \emph{GreedyMaxFactorisation} algorithm  runs in time linear in $n$  which is clearly optimum.


\section{An exact exponential algorithm for the \problem\ problem}\label{sec:exact_Algorithms}

In Section~\ref{sec:Np-hardness} we proved that the \problem\ problem is NP-hard even when $k=3$ and here we consider an exact algorithm for this problem. Notice that the exhaustive algorithm that considers all the possible $k$-factorisations of a string $w$ of length $n$  has a computational time bounded from below by  $k^{\Omega(n/k)}$ (resulting by the recurrence $T(n)=\sum_{j=1}^k T(n-j)$, known as $n$-Step  Fibonacci \cite{Wolfram}). In agreement with our NP-hardness result, when $k$ is constant the algorithm remains exponential.   In the following we give  an exact algorithm based on dynamic programming that runs in $O(nk |\Sigma|^k 4^{|\Sigma|^k})$ computational time. From this result we have that a constant bound on both  $k$ \emph{and} $|\Sigma|$ allows to solve the problem in polynomial time,  actually in linear time (\textit{i.e.} time $O(n)$). Notice  that  this is again in agreement with our NP-hardness result where the size of the alphabet is unbounded.

\begin{theorem}\label{theo:exact_algo}
The \problem\  problem can be solved in time $O(nk |\Sigma|^k 4^{|\Sigma|^k})$.
\end{theorem}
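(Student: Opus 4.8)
The plan is to solve \problem\ by a left-to-right dynamic programme whose state is the current position together with the set of $k$-factors already used. The point justifying this state is the same bookkeeping observation used above: once a prefix $w[1:i]$ has been factored, the only thing about that factorisation that influences how many further \emph{distinct} factors the suffix can still contribute is the set $S$ of factors that have occurred — not their multiplicities, nor where they occurred.

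Setup. Let $\mathcal{F}$ be the set of distinct substrings of $w$ of length between $1$ and $k$, put $N=|\mathcal{F}|$, and note $N\le\sum_{j=1}^{k}|\Sigma|^{j}\le 2|\Sigma|^{k}$ when $|\Sigma|\ge 2$ (the case $|\Sigma|=1$ being easy and within the bound separately). Fix once and for all an enumeration of $\mathcal{F}$, so that a subset $S\subseteq\mathcal{F}$ is an $N$-bit string, and precompute in $O(nk)$ time a table giving, for each $(i,\ell)$ with $1\le\ell\le\min(k,n-i)$, the index in this enumeration of $v_{i,\ell}:=w[i+1:i+\ell]$. Then testing ``$v_{i,\ell}\in S$'' costs $O(1)$, while forming the key $S\cup\{v_{i,\ell}\}$ (copying the bit array, flipping one bit) and indexing the DP table by it costs $O(N)=O(|\Sigma|^{k})$.

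The recurrence. For $0\le i\le n$ and $S\subseteq\mathcal{F}$ define $g(i,S)$ to be the maximum of $|D(F')\setminus S|$ over all width-$\le k$ factorisations $F'$ of the suffix $w[i+1:n]$, with $g(n,S)=0$. I claim
\[
g(i,S)=\max_{1\le \ell\le\min(k,\,n-i)}\Bigl(\,[\,v_{i,\ell}\notin S\,]\;+\;g\bigl(i+\ell,\ S\cup\{v_{i,\ell}\}\bigr)\Bigr),\qquad v_{i,\ell}=w[i+1:i+\ell].
\]
This is a one-line check: a width-$\le k$ factorisation of $w[i+1:n]$ is a first factor $v_{i,\ell}$ followed by a factorisation $F''$ of $w[i+\ell+1:n]$, and since $D(F')=\{v_{i,\ell}\}\cup D(F'')$ one has the disjoint decomposition $D(F')\setminus S=(\{v_{i,\ell}\}\setminus S)\;\sqcup\;\bigl(D(F'')\setminus(S\cup\{v_{i,\ell}\})\bigr)$, whence $|D(F')\setminus S|=[\,v_{i,\ell}\notin S\,]+|D(F'')\setminus(S\cup\{v_{i,\ell}\})|$; maximising over $F''$ and then over $\ell$ gives the displayed identity, and the base case is immediate. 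The optimum dimension of $w$ is $g(0,\emptyset)$, and an optimal factorisation is read off by the usual back-pointer walk; the table is filled for $i=n,n-1,\dots,0$ (so every value on the right-hand side is already available).

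Complexity. The table has at most $(n+1)\,2^{N}\le (n+1)\,4^{|\Sigma|^{k}}$ entries; evaluating the recurrence at one entry tries at most $k$ values of $\ell$, each trial costing $O(1)$ for the membership test plus $O(N)=O(|\Sigma|^{k})$ for forming and looking up the successor key. Multiplying, and absorbing the $O(nk)$ preprocessing and the $O(n\,4^{|\Sigma|^{k}})$ table initialisation, gives total time $O\!\left(nk\,|\Sigma|^{k}\,4^{|\Sigma|^{k}}\right)$. No step here is genuinely difficult; the only parts that need care are arguing rigorously that the set of used factors (and nothing more) is a sufficient state, and pinning down the constants in the complexity — the $4^{|\Sigma|^{k}}$ coming from $2^{N}$ with $N\le 2|\Sigma|^{k}$, and the extra $|\Sigma|^{k}$ from manipulating/indexing by the $N$-bit masks.
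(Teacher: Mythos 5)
Your proposal is correct and follows essentially the same approach as the paper: a dynamic program over states of the form (position in $w$, subset of the at most $2|\Sigma|^k$ distinct $k$-factors used so far), which is exactly where the $4^{|\Sigma|^k}$ comes from, with per-cell cost $O(k|\Sigma|^k)$. The only (cosmetic) difference is that the paper runs a forward reachability table $T[i,X]\in\{0,1\}$ recording whether some factorisation of the prefix $w[1:i]$ has factor-set exactly $X$ and then maximises $|X|$ over $X$ with $T[n,X]=1$, whereas you run a backward maximisation $g(i,S)$ counting new factors contributed by a factorisation of the suffix and read off $g(0,\emptyset)$; both are equivalent bookkeepings with identical table size and cost.
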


\begin{proof}
Given $w \in \Sigma^n$ and $k$  we start by introducing the algorithm that finds a set of $k$-factors $X^*$, for which there exists a $k$-factorisation of maximum dimension $F$ with $D(F)=X^*$.

Let $S$ be the set of all possible $k$-factors of $w$.  We define a dynamic programming matrix $T$ as follows. For every $1\leq i \leq n$ and every $X \subseteq S$ we define 

{\footnotesize
\[
T[i,X]=
\begin{dcases}
1, & \textrm{ if  there is a $k$-factorisation of  $w[1:i]$ for which the set of  $k$-factors is $X$} \\
0, & \textrm{otherwise}
\end{dcases}
\]
}
 
Notice that $T[n,X]=1$ if and only if  there exists a $k$-factorisation $F$ of $w$ with $D(F)=X$. Thus, $w$ has a $k$-factorisation of dimension $d$, for some integer $d$,  if and only if  $\bigvee_{X \in {\cal F}, |X|= d}  T[n,X]$. Furthermore we can find the set $X^*$ by looking for the $X$ of maximum cardinality for which$T[n,X]=1$. Hence, we can find $X^*$ by computing the values of all the $n2^{|S|}$ cells in $T$.  

Clearly, for every $X$, $T[1,X]=1$ if $X=\{w[1]\}$ and $0$ otherwise. For every $i>1$ notice that if $T[i,X]=1$ then there exists a  $j<i$ for which $w[j:i]$  belongs to $X$. Notice that as $X$ contains only strings of length at most $k$, we have that $j\geq \max\{1, i-k+1\}$. If $j=1$ then $T[i,X]=1$ if and only if $X=\{w[1:i]\}$. Otherwise let $j>1$, again as $T[i,X]=1$ we must have that for $w[1:j-1]$ there exists a $k$-factorisation $F$  such that either $D(F)=X$ or  $D(F)=X-\{w[j:i]\}$.  
Summarizing, for $i>1$ we have:

{\scriptsize
\begin{equation}\label{eq:recursion}
T[i,X]=\bigvee_{\max(1,i-k+1)<j\leq i}\Bigl( \bigl( w[j:i] \in X \bigr) \wedge  \bigl((j=1 \wedge |X|=1) \vee T[j-1,X] \vee T[j-1,X-\{w[j: i]\}\bigr) \Bigr)
\end{equation}
}
Note that we can compute the value of each cell $T[i,X]$ in time $O(k|X|)=O(k|S|)$. Thus we can obtain $X^*$ in time $O\left(nk|S|2^{|S|}\right)$.

So far we have simply computed the set $X^*$ of $k$-factors for which exists an optimal $k$-factorization $F$ with $D(F)=X^*$ (\textit{i.e.} a $k$-factorisation of maximum dimension). Finally, to find the factorisation $F$, we can use a simple procedure which, starting from the cell $T[n,X^*]$ and applying the recursion in (\ref{eq:recursion}) in the reverse order,  permits to find the sequence (in reverse order) of $k$-factors in $F$. This procedure requires again $O\left(nk|S|2^{|S|}\right)$.
Hence we can solve the \problem\ problem in $O\left(nk|S|2^{|S|}\right)$.

Notice now that as $S$ is a subset of all possible strings of length at most $k$ on $\Sigma$, thus  $|S|\leq \sum_{i=1}^{k}|\Sigma|^i\leq 2|\Sigma|^k$. Hence, our algorithm requires $O\left(nk|\Sigma|^k4^{|\Sigma|^k}\right)$ computational time.

  \end{proof}

\section{Conclusions and open questions}\label{sec:Conclusions}

In this paper we studied the \problemb\ and the \problem\ problem.  The first one has been proved to be NP-hard even if $k=2$ in \cite{Schmid2016} and  for this case  we provide a $3/2$-approximation algorithm. It remains an open problem  to find  a``good''  approximation algorithm that works for any $k$. Notice that the trivial algorithm that produces a factorisation containing only singletons is already a $k$-approximation.

Concerning the \problem\ problem  we showed that if the size of $\Sigma$ is not constant the problem is NP-hard even for $k=3$. In view of this we proposed  an exact algorithm of time  $O(nk |\Sigma|^k 4^{|\Sigma|^k})$. From this result follows that if $k$ and $|\Sigma|$ are both constant then the problem can be solved in linear time. Furthermore,  we proposed a $2$-approximation algorithm that works for any $k$. Many open questions remain.  Perhaps the most interesting one is whether  the  problem remains NP-hard when  $|\Sigma|$ is a constant (but the width is unbounded). Notice that despite the theoretical interest, this question has also a practical importance as the strings of DNA or RNA are on a fixed alphabet.  Concerning the $2$-approximation algorithm we tested it on random strings of length at most 16. In these experiments  the algorithm produces solutions that were always better than a $2$-approximation. Thus, an open question to see if the \emph{GreedyMaxFactorisation} algorithm  is tight or  the analysis of the approximation ratio can be improved.

\bibliographystyle{plain}
\bibliography{references_string}

\end{document}